\newtheorem{theorem}{Theorem}[section]  
\newtheorem{lemma}[theorem]{Lemma}      
\newtheorem{prop}[theorem]{Proposition}
\newtheorem{prob}[theorem]{Problem}
\newtheorem{deff}[theorem]{Definition}
\title{\LARGE
\textbf{Probabilistic verification algorithm for linear codes}}
\author[1]{\small\textbf {{Mingchao Li*}}}
\author[1]{\small\textbf {{Jiyou Li{$^\dagger$}}}}
\affil[1]{\small School of Mathematical Sciences, Shanghai Jiao Tong University, Shanghai, P.R.China}
\date{}
\begin{document}

\maketitle
\thispagestyle{fancy} 
\fancyhf{} 
\fancyhead{} 
\fancyfoot{} 
\lfoot{*: lmc3051@sjtu.edu.cn, $\dagger$: lijiyou@sjtu.edu.cn}
\renewcommand{\headrulewidth}{0pt}
\renewcommand{\footrulewidth}{0.4pt}

\begin{abstract}
In this paper, we propose a probabilistic algorithm suitable for any linear code $C$ to determine whether a given vector $\mathbf{x}$ belongs to $ C$. The algorithm achieves $O(n\log n)$ time complexity, $ O(n^2)$ space complexity. The error probability is less than $1/\mathrm{poly}(n)$ in the asymptotic sense.

 \vskip 2mm

\textbf{Keywords:} Probabilistic algorithms, linear codes.

\end{abstract}

\section{Introduction}

Linear codes have been widely used over the past decades. Significant progress has been made in developing decoding algorithms for specific code families, including the Berlekamp-Massey algorithm \cite{1054260} for Reed–Solomon (RS), BCH, and Goppa codes; modular approach and Fast Fourier Transform (FFT) for RS codes \cite{9925209}; and majority voting method \cite{179340} for algebraic geometry (AG) codes.   Before initiating the decoding procedure, one must check whether the received codeword contains errors. If no errors are detected, decoding is unnecessary. If errors are found, mechanisms such as retransmission systems can be used to minimize the number of times the decoder is invoked. Consequently, fast error detection is essential in practical applications.  Furthermore, certain codes that admit very efficient membership tests---known as locally testable codes (LTCs)---are also intrinsically connected to the Probabilistically Checkable Proofs (PCP).

Previous work has primarily focused on RS codes verification and the study of LTCs. Since RS codes can be viewed as evaluations of a polynomial on a proper subset, verifying them essentially amounts to recovering the coefficients from received vector. The primary technique used for this purpose is the FFT technique over finite fields. The reader is referred to  \cite{4395269} \cite{6176255} \cite{5910103} \cite{9925209} \cite{doi:10.1137/1.9781611977912.135} \cite{5625613} \cite{pollard1971fast} for more details about multiplicative FFT, additive FFT, cyclotomic FFT and Galois based FFT (G-FFT). LTCs can also be checked quickly. They are error-correcting codes equipped with probabilistic testers that can verify membership using only a small number of queries to the received vector \cite{10.1145/1162349.1162351}. The tester should accept codewords with probability one, and reject words that are far
(in Hamming distance) from the code with noticeable probability \cite{10.1007/978-3-642-15369-3_50}. Some well known codes, such as the Hadamard code, binary Reed-Muller (RM) codes \cite{1522661}, almost-orthogonal codes (including Dual-BCH, Goppa and  trace subcodes of AG codes) \cite{1530724}, are LTCs. However, many other codes are not LTCs, such as random codes and certain types of AG codes \cite{chen2009testing} \cite{10.1007/978-3-642-15369-3_50}.

While much research focuses on fast verification algorithms for specific linear codes, the problem of accelerating codeword verification for general linear codes has not been fully investigated. If the linear code lacks a rich algebraic structure, codeword verification may sometimes become the computational bottleneck. A typical example is the list decoding algorithm for RM codes. In 2004, Pellikaan and Wu \cite{1278668} viewed RM codes as subfield subcodes of RS codes and proposed a two-step list decoding method: the first step calls an RS code list decoder, and the second step checks each codeword in the resulting list to determine whether it belongs to the RM code. Although an algorithm proposed by Cohn and Heninger \cite{Cohn2015} in 2015 achieved quasi-linear time complexity for the first step under certain assumptions, the second step still required computing the syndrome, which takes $O(n^2)$ time. Recently, Kuang et al.\cite{11197050} optimized the second step to quasi-linear time by introducing a new basis for RM codes. The key to their algorithm lies in the fact that the transformation between the new basis and the standard basis is quasi-linear.

 We give some notation before stating the main problem. Let $C$ be any $[n,k]$ linear code over finite field $\mathbb{F}_q$ and let $H$ be its parity check matrix. Assuming a vector $\mathbf{x}\in \mathbb{F}_q^n$ is received, we need to determine whether $\mathbf{x}\in C$ as quickly as possible. Clearly, we cannot expect to directly compute $H\mathbf{x}^T$ in quasi-linear time, as that would imply an improvement in the complexity of matrix multiplication. Instead, we aim to design a probabilistic algorithm just like the idea behind LTCs.. Let us briefly analyze the desired parameters of our algorithm. Here, all operations are multiplications or additions over the finite field $\mathbb{F}_q$, and storage space is measured by the number of elements in $\mathbb{F}_q$. Our primary goal for the algorithm is to produce a result within $ O(n\log n)$ time, since directly computing $H\mathbf{x}^T$ requires $O(n^2)$ operations. Secondly, as the parity check matrix requires storing $n(n-k$) (or $k(n-k)$ in the systematic case) elements, the algorithm's storage space must be bounded by $O(n^2)$. In other words, compared to the standard method, we can tolerate at most a constant-factor overhead in storage.  Finally, we aim for an algorithm whose error probability is bounded by $1/\mathrm{poly}(n)$. If the error probability were too high, the algorithm would  be of limited practical utility. We thus arrive at the following central problem:
\begin{prob}
\label{prob1.1}
    Is it possible to design a probabilistic algorithm that decides whether $\mathbf{x}\in C$ in $O(n \log n)$ time, using $O(n^2)$ space, and with an error probability bounded by $1/\mathrm{poly}(n)$?
\end{prob}

Let $\mathbf{h}_1, \dots, \mathbf{h}_{n-k}$ denote the row vectors of the parity check matrix $H$. 
To prove $\mathbf{x} \notin C$, it suffices to find a row vector $\mathbf{h}_i$ that is not orthogonal to $\mathbf{x}$. 
A natural approach is to randomly sample one of the row vectors $\mathbf{h}_1, \dots, \mathbf{h}_{n-k}$, 
compute its inner product with $\mathbf{x}$, and aim to quickly identify such a non-orthogonal vector. Unfortunately, this straightforward strategy generally not works very well. 
Since $n>n -k$ and $H$ has full column rank, there exists a nonzero vector $\mathbf{e}$ such that the weight of $H\mathbf{e}^T$ equals 1. Suppose $\mathbf{x} = \mathbf{c} + \mathbf{e}$ for some $\mathbf{c} \in C$. 
Then the probability of selecting a row vector not orthogonal to $\mathbf{x}$ is only $1/(n-k)$. On average, we must independently perform $n-k$ inner product operations to detect $\mathbf{e}$, which provides no improvement compared to directly computing syndrome. The situation becomes even worse in the systematic case. 
Assume $H$ is given in the block form $[H_1\;|\; I_{n-k}]$. Even to detect a single burst error occurring in the last $n-k$ positions 
(corresponding to the identity submatrix), one would need to compute all inner products $(\mathbf{h}_1, \mathbf{x}), \dots, (\mathbf{h}_{n-k}, \mathbf{x})$.

Although this naive randomized method fails to achieve the desired efficiency, 
it highlights the importance of amplifying the proportion of vectors that are not orthogonal to $\mathbf{x}$. 
This observation motivates the algorithm to be presented in the next section.

\section{Algorithm design}
A natural generalization of the collection of row vectors in the parity check matrix is the following concept of a test set.
\begin{deff}
    Let $C$ be a $[n,k]$ linear code in $\mathbb{F}_q^n$ and $\mathbb{F}_{q^u}$ be the extension field of $\mathbb{F}_q$ of degree $u$. A subset $S$ of $\mathbb{F}_{q^u}^n$ is a \textbf{test set} with \textbf{designed probability} $p\in (0,1)$ for the code $C$ if for every $\mathbf{x} \in \mathbb{F}_q^n$:
\begin{enumerate}
    \item $\mathbf{x} \in C$ if and only if $(\mathbf{x}, \mathbf{y}) = 0$ for all $\mathbf{y} \in S$.
    \item If there exists $\mathbf{y} \in S$ such that $(\mathbf{x}, \mathbf{y}) \neq 0$, then 
    \begin{equation}
        \sharp\{\mathbf{y}\in S\;|\; (\mathbf{x},\mathbf{y})\not = 0\}\ge (1-p)\sharp S.\label{1}
    \end{equation}
\end{enumerate}
\end{deff}
Note that the second condition is equivalent to $\sharp\{\mathbf{y}\in S\;|\; (\mathbf{x},\mathbf{y}) = 0\}\le p\sharp S$. The test set $S$ and the original code $C$ have the following relationship.
\begin{prop}
    If $C_S\subset \mathbb{F}_{q^u}^n$ is the linear code generated by vectors in $S$, we have $C^\perp = \mathrm{Tr}(C_S)$, where $\mathrm{Tr}(C_S)$ denotes the trace code of $C_S$.
\end{prop}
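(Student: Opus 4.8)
The plan is to recognize that the first condition in the definition of a test set says exactly that $C$ is the subfield subcode of the dual of $C_S$, and then to convert this into the claimed trace-code identity via Delsarte's theorem.

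First I would rephrase condition 1. Because the $\mathbb{F}_{q^u}$-valued form $(\cdot,\cdot)$ is linear in its second argument and $C_S$ is by definition the $\mathbb{F}_{q^u}$-span of $S$, the requirement $(\mathbf{x},\mathbf{y})=0$ for all $\mathbf{y}\in S$ holds if and only if $(\mathbf{x},\mathbf{y})=0$ for all $\mathbf{y}\in C_S$; that is, $\mathbf{x}\in C_S^\perp$, with the dual taken inside $\mathbb{F}_{q^u}^n$. Since condition 1 quantifies only over $\mathbf{x}\in\mathbb{F}_q^n$, it reads precisely
\[
C \;=\; C_S^\perp\cap\mathbb{F}_q^n,
\]
i.e. $C$ is the subfield subcode of $C_S^\perp$.

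Next I would invoke Delsarte's theorem, which for any $\mathbb{F}_{q^u}$-linear code $B\subseteq\mathbb{F}_{q^u}^n$ states that $\mathrm{Tr}(B)^\perp = B^\perp\cap\mathbb{F}_q^n$ (the left-hand dual taken in $\mathbb{F}_q^n$, the right-hand dual in $\mathbb{F}_{q^u}^n$). Applying this with $B=C_S$ gives $\mathrm{Tr}(C_S)^\perp = C_S^\perp\cap\mathbb{F}_q^n = C$, where the last equality is the rephrasing above. Since both $\mathrm{Tr}(C_S)$ and $C$ live in $\mathbb{F}_q^n$ and the double dual is the identity there, dualizing both sides yields $\mathrm{Tr}(C_S)=C^\perp$, as desired.

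If a self-contained argument is preferred to citing Delsarte, I would first establish the easy inclusion $\mathrm{Tr}(C_S)\subseteq C^\perp$ directly: for $\mathbf{c}\in C$ and $\mathbf{y}\in C_S$, the $\mathbb{F}_q$-linearity of the trace together with $c_i\in\mathbb{F}_q$ gives $(\mathbf{c},\mathrm{Tr}(\mathbf{y}))=\sum_i c_i\,\mathrm{Tr}(y_i)=\mathrm{Tr}\big((\mathbf{c},\mathbf{y})\big)=\mathrm{Tr}(0)=0$, using that $\mathbf{c}\in C$ forces $(\mathbf{c},\mathbf{y})=0$. Equality would then follow by matching $\mathbb{F}_q$-dimensions. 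The main obstacle on this route, and indeed the real content of the proposition, is precisely the dimension count $\dim_{\mathbb{F}_q}\mathrm{Tr}(C_S)=\dim_{\mathbb{F}_q}C^\perp=n-k$; since the identity $\dim_{\mathbb{F}_q}\mathrm{Tr}(B)=n-\dim_{\mathbb{F}_q}\!\big(B^\perp\cap\mathbb{F}_q^n\big)$ is exactly Delsarte's theorem, leaning on that citation is the cleanest path and I expect to take it.
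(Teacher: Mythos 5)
Your proposal is correct and follows essentially the same route as the paper: both first translate condition 1 of the test-set definition into the subfield-subcode identity $C = C_S^\perp\cap\mathbb{F}_q^n$ and then conclude via Delsarte's theorem. The only cosmetic difference is that you apply Delsarte in the form $\mathrm{Tr}(B)^\perp = B^\perp\cap\mathbb{F}_q^n$ with $B=C_S$ and dualize at the end, whereas the paper applies the equivalent form $\bigl(B|_{\mathbb{F}_q^n}\bigr)^\perp = \mathrm{Tr}(B^\perp)$ with $B=C_S^\perp$; these are the same theorem and the same argument.
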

\begin{proof}
    First we have $\mathbf{x}\in C^\perp_S$ for all $\mathbf{x}\in C$ according to the first property of $S$, which means $C\subset C^\perp_S\cap \mathbb{F}_q^n$; for all $\mathbf{z}\in C_S^\perp \cap \mathbb{F}_q^n$ again we use the first property and obtain $\mathbf{z}\in C$. Finally we get $C=C_S^\perp \cap \mathbb{F}_q^n=C_S^\perp|_ {\mathbb{F}_q^n}$. Now we use Delsarte's theorem and obtain $\left(C_S^\perp|_ {\mathbb{F}_q^n}\right)^\perp = \mathrm{Tr}\left((C_S^\perp)^\perp\right)$. This induces $C^\perp = \mathrm{Tr}(C_S)$.
\end{proof}
Using the concept of a test set, we propose the following general algorithm framework. The test set $S$ can be precomputed and is assumed to be available when a vector $\textbf{x}$ is received.

\begin{algorithm}[H]
\caption{DetermineCodeword}
\label{alg:deter}
\begin{algorithmic}[1]
\Require $\mathbf{x}\in \mathbb{F}_q^n$
\Ensure True or False 
\For{$i = 1$ to $R$}
    \State Randomly select a vector $\mathbf{y}$ from $S$
    \State Compute inner product $z = (\mathbf{x}, \mathbf{y})$
    \If{$z \neq 0$}
        \State \Return False
    \EndIf
\EndFor
\State \Return True
\end{algorithmic}
\end{algorithm}

\begin{prop}
    Algorithm \ref{alg:deter} can solve the decision problem “$\mathbf{x}\overset{?}{\in} C$" in $O(Run)$ operations, using $O(\left(\sharp S)un\right)$ space and with an error probability at most $p^R$.
\end{prop}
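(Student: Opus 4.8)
The plan is to analyze Algorithm~\ref{alg:deter} directly, treating its three claimed bounds (time, space, error) as three separate obligations. The overall structure is a single loop that runs at most $R$ times, and each iteration performs one random sampling, one inner product, and one comparison; so my strategy is to cost a single iteration and then multiply by $R$, while separately reasoning about the probability that the loop terminates with the wrong answer.

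First I would establish correctness and the error bound together, splitting on whether $\mathbf{x}\in C$. If $\mathbf{x}\in C$, then by the first property of a test set $(\mathbf{x},\mathbf{y})=0$ for every $\mathbf{y}\in S$, so the test $z\neq 0$ never fires and the algorithm correctly returns \texttt{True} with no error. If $\mathbf{x}\notin C$, then by the first property there exists some $\mathbf{y}\in S$ with $(\mathbf{x},\mathbf{y})\neq 0$, and the second property guarantees that the fraction of $\mathbf{y}\in S$ with $(\mathbf{x},\mathbf{y})=0$ is at most $p$. The only way the algorithm errs is to return \texttt{True} for such an $\mathbf{x}$, which requires every one of the $R$ independent samples to land in the orthogonal set; since each sample is drawn uniformly and independently from $S$, this probability is at most $p^R$. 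Thus the error probability is one-sided and bounded by $p^R$.

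Next I would handle the resource bounds by costing one iteration in terms of operations over $\mathbb{F}_q$. Each $\mathbf{y}\in S\subset\mathbb{F}_{q^u}^n$ has $n$ coordinates, and the inner product $(\mathbf{x},\mathbf{y})$ is a sum of $n$ products of elements of $\mathbb{F}_{q^u}$; since one multiplication or addition in $\mathbb{F}_{q^u}$ costs $O(u)$ operations in $\mathbb{F}_q$ under the usual measure, a single inner product costs $O(un)$ operations. The sampling and comparison are dominated by this. Multiplying by the loop bound $R$ gives $O(Run)$ operations, as claimed. For the space bound, the algorithm must store the test set $S$, which consists of $\sharp S$ vectors each of length $n$ over $\mathbb{F}_{q^u}$, i.e. $(\sharp S)\,un$ field elements of $\mathbb{F}_q$; the working variables $\mathbf{x}$, $\mathbf{y}$, and $z$ add only lower-order terms, giving $O\bigl((\sharp S)un\bigr)$ space.

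I expect the only genuinely delicate point to be the independence assumption underlying the $p^R$ bound: the clean product bound requires that the $R$ samples be independent and identically distributed (sampling with replacement from $S$), so I would state explicitly that line~2 draws uniformly at random \emph{with replacement}, which makes the per-iteration failure probability uniformly at most $p$ and the iterations independent. Everything else is routine bookkeeping, so I would keep the presentation brief and fold the time and space arguments into a couple of sentences once the cost of a single $\mathbb{F}_{q^u}$ inner product is pinned down.
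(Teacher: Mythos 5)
Your error-probability and space analyses match the paper's proof: the error is one-sided, the per-round failure probability is at most $p$ when $\mathbf{x}\notin C$ by the second test-set property, the $R$ draws are independent (sampling with replacement, as the paper also implicitly assumes), and storing $S$ takes $(\sharp S)un$ elements of $\mathbb{F}_q$. The flaw is in your time analysis. You justify the $O(un)$ cost of one inner product by asserting that ``one multiplication or addition in $\mathbb{F}_{q^u}$ costs $O(u)$ operations in $\mathbb{F}_q$ under the usual measure.'' That is false for multiplication: multiplying two general elements of $\mathbb{F}_{q^u}$, viewed as polynomials of degree less than $u$ over $\mathbb{F}_q$ reduced modulo an irreducible polynomial, costs $O(u^2)$ operations naively, or $O(u\log u\log\log u)$ with fast polynomial arithmetic---in no case $O(u)$. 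If the inner product genuinely consisted of $n$ multiplications of pairs of $\mathbb{F}_{q^u}$ elements, your argument would only yield a bound of roughly $O(Ru^2n)$, not the claimed $O(Run)$.

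The per-round bound $O(un)$ is nevertheless correct, and the reason is precisely the observation the paper makes and you did not invoke: $\mathbf{x}$ lies in $\mathbb{F}_q^n$, not in $\mathbb{F}_{q^u}^n$, so each product $x_iy_i$ is a scalar multiplication of an $\mathbb{F}_{q^u}$ element (a coefficient vector of length $u$ over $\mathbb{F}_q$) by an element of $\mathbb{F}_q$, which costs exactly $u$ multiplications in $\mathbb{F}_q$; the subsequent additions in $\mathbb{F}_{q^u}$ are coordinate-wise and cost $u$ additions in $\mathbb{F}_q$ each. In other words, the computation never performs a single multiplication inside $\mathbb{F}_{q^u}$---this mixed-field structure is the whole point of the paper's sentence that the inner product ``does not involve multiplication operations in $\mathbb{F}_{q^u}$.'' With that substitution your proof is complete and coincides with the paper's.
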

\begin{proof}
    Since the algorithm performs only one inner product operation per round, and noting that elements in $\mathbb{F}_{q^u}$ can be viewed as polynomials over $\mathbb{F}_q$, the computation of the inner product between vectors in $\mathbb{F}_{q^u}^n$ and $\mathbb{F}_q^n$ does not involve multiplication operations in $\mathbb{F}_{q^u}$. Therefore, each round of the algorithm executes $O(un)$ operations. With at most $R$ rounds of execution, the total number of operations is at most $O(Run)$. The set $S$ contains a total of $(\sharp S)n$ elements from $\mathbb{F}_{q^u}$. Therefore, storing $S$ requires $O((\sharp S) n  u)$ space in terms of $\mathbb{F}_q$ elements. Finally, since the proportion of vectors orthogonal to $\mathbf{x}$ is at most $p$, the failure probability of each round of the algorithm is at most $p$ when $\mathbf{x}\not \in C$. Thus, the error probability after $R$ independent repetitions is at most $p^R$. When $\mathbf{x}\in C$, the algorithm will not make mistakes.
\end{proof}
Now we present an explicit construction method for the test set $S$. This method essentially increases the proportion of non-orthogonal vectors by encoding the syndrome. Let $H\in \mathbb{F}_q^{(n-k)\times n}$ be the parity check matrix of $C$, $\bar{G}$ is the generator matrix of a $[\bar{n},n-k,\bar{d}]$ code over $\mathbb{F}_{q^u}^{\bar{n}}$. Let $S$ be the set of all row vectors of $\bar{G}^TH$, we have 
\[
\mathbf{x}\in C\Leftrightarrow H\mathbf{x}^T=\mathbf{0}\Leftrightarrow \bar{G}^TH\mathbf{x}^T=\mathbf{0}
\]
and
\begin{equation}
    \frac{\sharp\{\mathbf{y}\in S\;|\;(\mathbf{x},\mathbf{y})\neq 0\}}{\sharp S}=\frac{\mathrm{wt}(\bar{G}^TH\mathbf{x}^T)}{\bar{n}}\ge \frac{\bar{d}}{\bar{n}}.\label{2}
\end{equation}
Therefore $S$ is a test set with designed probability $1-\bar{d}/\bar{n}$. This method of constructing $S$ is efficient, as pre-computing $S$ involves just one matrix multiplication.  By carefully choosing the parameter $\bar{n}$  and $\bar{d}$, Algorithm \ref{alg:deter} can significantly improve the efficiency of the verification. Two different kinds of $S$ are obtained by choosing different kinds of $\bar{G}$ which will be illustrated below.


\paragraph{$\bar{G}$ comes from MDS codes}
For a fixed $u$ satisfying $\frac{n-k-1}{q^u-1}< 1$, arbitrarily choose $p$ within the range $\left(\frac{n-k-1}{q^u-1}, 1\right)$.
 Set $m=\left \lceil (n-k-1)/p \right \rceil $ and construct an MDS code with parameter $[m,n-k,m-(n-k)+1]$ over $\mathbb{F}_{q^u}$. Let $\bar{G}$ be its generator matrix, the designed probability of test set $S$ constructed by row vectors of $\bar{G}^TH$ is 
\[
1-\frac{m-(n-k)+1}{m}=\frac{n-k-1}{\left \lceil \frac{n-k-1}{p}  \right \rceil }\le p
\]
by inequality (\ref{1}) and (\ref{2}). In other words, given any $p$ in $\left(\frac{n-k-1}{q^u-1}, 1\right)$, we can construct a test set with designed probability $p$. This is the reason that we refer to $p$ as the designed probability.

\paragraph{$\bar{G}$ comes from asymptotically good codes}
Now we turn to solve the problem \ref{prob1.1}, which requires choosing $\bar{G}$ as generator matrix from asymptotically good codes.

\begin{theorem}\label{mainthm}
    Let $\{C_i\}$ be a family of asymptotically good $[n_i,k_i,d_i]$ codes over $\mathbb{F}_{q^u}$, which means \[\liminf_{i\to \infty} \frac{k_i}{n_i}\ge \alpha,\;\liminf_{i\to \infty}\frac{d_i}{n_i}\ge \beta,\;0<\alpha,\beta<1.\] We further assume there is a constant $W$ such that $n_i \le Wn_{i-1}$ for all $i$. By using generator matrices of $C_i$ to construct nice test sets, Algorithm \ref{alg:deter} can solve the decision problem  “$\mathbf{x}\overset{?}{\in} C$" in $O(un\log n)$ operations, using $O(un^2)$ space and with an error probability bounded by $1/\mathrm{poly}(n)$.
\end{theorem}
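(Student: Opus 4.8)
The plan is to instantiate the general construction of the test set $S$ from the rows of $\bar{G}^T H$, where $\bar{G}$ is the generator matrix of a suitable member of the family $\{C_i\}$, and then simply to invoke the earlier proposition bounding the cost of Algorithm \ref{alg:deter}. The construction demands a code of dimension \emph{exactly} $n-k$ (so that $\bar{G}^T H$ makes sense), whereas the family only supplies codes of the fixed dimensions $k_i$, so my first step is to select the right index and adjust the dimension. Since $d_i/n_i$ is bounded below by a positive constant and $n_i \to \infty$, the dimensions satisfy $k_i \to \infty$, hence there is a smallest index $i^*$ with $k_{i^*} \ge n-k$. I take $\bar{G}$ to be any $n-k$ rows of the generator matrix $G_{i^*}$ of $C_{i^*}$; these are linearly independent, so $\bar{G}$ generates a subcode of $C_{i^*}$ of dimension exactly $n-k$, with block length $\bar{n} = n_{i^*}$ and minimum distance $\bar{d} \ge d_{i^*}$, because every nonzero word of a subcode is a nonzero word of the ambient code and therefore has weight at least $d_{i^*}$.

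With this choice, inequality (\ref{2}) gives designed probability $p = 1 - \bar{d}/\bar{n} \le 1 - d_{i^*}/n_{i^*}$. By asymptotic goodness, for all sufficiently large $i$ one has $d_i/n_i \ge \beta - \epsilon$, hence $p \le 1 - \beta + \epsilon$, a constant strictly below $1$. This is the essential gain: the proportion of test vectors orthogonal to a non-codeword is bounded away from $1$ by a constant independent of $n$.

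I would then control the remaining two parameters. The earlier proposition bounds the error by $p^R$; choosing $R = \lceil c \log n \rceil$ gives $p^R \le (1-\beta+\epsilon)^{c\log n} = n^{-c|\log(1-\beta+\epsilon)|}$, which is $1/\mathrm{poly}(n)$ for $c$ large enough, while simultaneously keeping the operation count at $O(Run) = O(un\log n)$. For the space bound $O(\bar{n}\,u\,n) = O(un^2)$ I must show $\bar{n} = n_{i^*} = O(n)$, and this is exactly where the growth hypothesis $n_i \le W n_{i-1}$ enters.

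The main obstacle — and the reason the hypothesis $n_i \le W n_{i-1}$ is imposed — is precisely this estimate on $n_{i^*}$. By minimality of $i^*$ we have $k_{i^*-1} < n-k$, and asymptotic goodness gives $n_{i^*-1} \le k_{i^*-1}/(\alpha-\epsilon)$ for large $i$; combining with $n_{i^*} \le W n_{i^*-1}$ yields $n_{i^*} \le W(n-k)/(\alpha-\epsilon) = O(n)$, so the space is indeed $O(un^2)$. Without the bound $n_i \le W n_{i-1}$ the jump in block length between consecutive family members could be arbitrarily large, so the smallest code of dimension at least $n-k$ might have block length far exceeding $n$ and destroy the space bound; controlling this gap is the only genuinely delicate point, the rest being a direct substitution into the preceding proposition.
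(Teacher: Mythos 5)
Your proposal is correct and follows essentially the same route as the paper's own proof: choose the minimal index with $k_{i^*}\ge n-k$, delete rows of the generator matrix to get dimension exactly $n-k$, bound the designed probability by $1-\beta+\epsilon$, take $R=O(\log n)$ rounds, and use $n_{i^*}\le Wn_{i^*-1}\le W k_{i^*-1}/(\alpha-\epsilon)<W(n-k)/(\alpha-\epsilon)$ for the space bound. Your write-up is in fact slightly more careful than the paper's in justifying that the deleted-row matrix still yields minimum distance at least $d_{i^*}$ (via the subcode argument) and dimension exactly $n-k$.
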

\begin{proof}
Since $C_i$ is asymptotically good, we may assume
\begin{equation}
\alpha-\epsilon \le \frac{k_i}{n_i},\;\beta-\epsilon\le \frac{d_i}{n_i}\label{3}    
\end{equation}
for all $i$, where $\epsilon$ is in $(0,\min\{\alpha,\beta\})$. Given any $[n,k]$ code $C$ with parity check matrix $H$, and assume $k_{i-1}<n-k\le k_i$ for some $i$, the generator matrix $\bar{G}_i$ of $C_i$ has shape $k_i\times n_i$. We delete some rows of $\bar{G}_i$ and convert it into a $(n-k)\times n_i$ matrix $\bar{G}_i'$ (when $n-k=k_i$, $\bar{G}_i'=\bar{G}_i$), and define the test set $S$ as the set of row vectors of $\bar{G}_i'^TH$. The designed probability $p$ of $S$ satisfies
\[
p\le 1-\frac{d_i}{n_i}\le 1-\beta +\epsilon <1
\]
by (\ref{1}) and (\ref{3}). Take the number of rounds $R=R_1\log n$, where $R_1$ is a constant and will be determined soon, the error probability after $R$ rounds is 
\[
p^R\le (1-\beta+\epsilon)^{R_1\log n} =\frac{1}{n^{R_1\log(1-\beta+\epsilon)^{-1}}},
\]
where $\log(1-\beta+\epsilon)^{-1}>0$. Therefore the error probability can be bounded by $1/\mathrm{poly}(n)$ by appropriately increasing the constant $R_1$ such that $R_1\log (1-\beta-\epsilon)^{-1}$ is larger than $\deg \mathrm{poly}(n)$. The total number of operations is obviously $O(unR_1\log n)=O(un\log n)$. Finally we show that the space complexity $O((\sharp S)un)=O(un^2)$ by
\[
\sharp S=n_i\le Wn_{i-1}\le W\frac{k_{i-1}}{\alpha-\epsilon}<\frac{W}{\alpha -\epsilon}(n-k).
\]
    
\end{proof}
Notably, $C_i$ can be constructed using the Garcia-Stichtenoth (GS) tower. More details on $C_i$ can be found in \cite{Garcia1995} and chapters 7 and 8 in \cite{Stichtenoth2009}. The GS tower requires the finite field size to be a square, so we choose $u=2$. Then the algorithm will satisfies all conditions in problem \ref{prob1.1}.

Our algorithm heavily relies on randomly selecting $O(\log n)$ vectors from $S$. This randomness guarantees an $O(n \log n)$ time complexity and a low error probability. For a general linear code $C$ and an arbitrary vector $\textbf{x}$, we clearly cannot determine whether $\textbf{x}$ is in $C$ by testing its orthogonality with only $O(\log n)$ vectors. Thus, derandomizing this procedure to obtain a deterministic algorithm presents an interesting and challenging problem.
\begin{prob}
    Is there a deterministic algorithm that solves the codeword decision problem for any linear code in $O(n\log n)$ time?
\end{prob}

\section{Applications}
In this section, we present some applications of our algorithm. 

\subsection{List decoding RM codes}
The first application is to list decoding of RM codes. We follow Wu's idea~\cite{1278668}, which views RM codes as subfield subcodes of RS codes.

\begin{lemma}[\cite{11197050}]
    Assume $r=u(q-1)+s$, where $0\le s<q-1$. Let $k=q^m-q^{m-u-1}(q-s)+1$. The Reed-Muller code $\mathrm{RM}_q(r,m)$ of length $N=q^m$ can be viewed as a subfield subcode of $\mathrm{RS}_{q^m}(q^m,k)$. Moreover, $\mathrm{RM}_q(r,m)$ has the same minimum distance $D=q^{m-u-1}(q-s)$ as $\mathrm{RS}_{q^m}(q^m,k)$.
\end{lemma}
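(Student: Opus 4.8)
The plan is to realize both codes as evaluation codes on a common point set and then translate the degree constraint defining $\mathrm{RM}_q(r,m)$ into a degree constraint on univariate polynomials over $\mathbb{F}_{q^m}$. First I would fix an $\mathbb{F}_q$-basis $\alpha_1,\dots,\alpha_m$ of $\mathbb{F}_{q^m}$ and identify $\mathbb{F}_q^m$ with $\mathbb{F}_{q^m}$ via $(x_1,\dots,x_m)\mapsto \beta=\sum_i x_i\alpha_i$. Under this identification the evaluation points of $\mathrm{RM}_q(r,m)$ (all of $\mathbb{F}_q^m$) become exactly the evaluation points of $\mathrm{RS}_{q^m}(q^m,k)$ (all of $\mathbb{F}_{q^m}$), so a codeword of either code is a function $\mathbb{F}_{q^m}\to\mathbb{F}_{q^m}$, and the subfield-subcode condition ``all coordinates lie in $\mathbb{F}_q$'' is simply the statement that the function is $\mathbb{F}_q$-valued, which holds automatically for evaluations of polynomials over $\mathbb{F}_q$. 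The whole statement thus reduces to matching the two degree constraints and then comparing minimum distances.

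The heart of the argument is the correspondence between the total degree of a reduced multivariate polynomial and the base-$q$ digit sum of the exponents of its univariate representative, which I will denote $w_q(\cdot)$. I would use that each coordinate projection $\beta\mapsto x_i$ is $\mathbb{F}_q$-linear, hence given by a linearized polynomial $\ell_i(\beta)=\sum_t c_{i,t}\beta^{q^t}$; a monomial $\prod_i x_i^{e_i}$ of total degree $\sum_i e_i\le r$ therefore expands as $\prod_i \ell_i(\beta)^{e_i}$, a sum of powers $\beta^{j}$ whose exponent $j$ is a sum of at most $r$ powers of $q$. After reduction modulo $\beta^{q^m}-\beta$, every surviving exponent $j\in\{0,\dots,q^m-1\}$ satisfies $w_q(j)\le r$; conversely $\beta^j$ has multivariate degree $w_q(j)$ since each $\beta^{q^t}$ is $\mathbb{F}_q$-linear in the $x_i$. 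It then remains to compute the largest $j<q^m$ with $w_q(j)\le r$: writing $r=u(q-1)+s$ and loading the top digits, the maximum is $j_{\max}=q^m-q^{m-u-1}(q-s)=k-1$. Hence every codeword of $\mathrm{RM}_q(r,m)$ is the evaluation of a univariate polynomial of degree at most $k-1$, so it lies in $\mathrm{RS}_{q^m}(q^m,k)$, and being $\mathbb{F}_q$-valued it lies in the subfield subcode, which establishes the first claim.

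For the minimum distance I would argue from both sides. The code $\mathrm{RS}_{q^m}(q^m,k)$ is MDS, so its distance is $q^m-k+1=q^{m-u-1}(q-s)=D$; passing to the subfield subcode and then to $\mathrm{RM}_q(r,m)$ only shrinks the code, whence $d(\mathrm{RM}_q(r,m))\ge D$. For the reverse inequality I would exhibit the standard minimum-weight generalized Reed–Muller codeword, namely $\prod_{i=1}^{u}\prod_{a\in A_i}(x_i-a)\cdot\prod_{b\in B}(x_{u+1}-b)$ with $|A_i|=q-1$ and $|B|=s$, which has total degree $r$ and is nonzero on a fraction $(1/q)^u\cdot(q-s)/q$ of $\mathbb{F}_q^m$, i.e. on exactly $q^{m-u-1}(q-s)=D$ points. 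This pins $d(\mathrm{RM}_q(r,m))=D$, so the two codes share the minimum distance $D$.

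The step I expect to be the main obstacle is the digit-sum correspondence: proving that after reducing $\prod_i\ell_i(\beta)^{e_i}$ modulo $\beta^{q^m}-\beta$ no exponent of base-$q$ digit sum exceeding $r$ can survive. This rests on the arithmetic fact that $w_q$ is non-increasing under the carries produced by adding powers of $q$ and under the reduction $q^m\equiv 1$; making this precise, and checking the boundary behaviour at $j=0$ and the wrap-around $q^m-1\equiv 0$, is the delicate part, whereas the extremal computation of $j_{\max}$ and the Singleton/MDS bookkeeping for the distance are then routine.
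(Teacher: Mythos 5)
The first thing to note is that the paper contains no proof of this lemma: it is quoted, with citation, from Kuang et al.\ \cite{11197050}, and the underlying result is due to Pellikaan and Wu \cite{1278668}. So there is no in-paper argument to compare against; your proposal is in effect a reconstruction of the standard Pellikaan--Wu proof, and it is essentially correct. The forward inclusion works exactly as you describe: the coordinate maps become linearized polynomials $\ell_i(\beta)=\sum_t c_{i,t}\beta^{q^t}$, a monomial of total degree at most $r$ expands into powers $\beta^j$ with $j$ a sum of at most $r$ powers of $q$, and the digit sum $w_q$ is subadditive (each carry trades $q$ units at one position for $1$ at the next). For the reduction step, the clean formulation is \emph{folding}: since $q^p\equiv q^{p\bmod m}\pmod{q^m-1}$, replace every digit at a position $\ge m$ by the same digit at position $p\bmod m$ and re-carry; by subadditivity this never increases $w_q$, and iterating terminates at the representative in $\{1,\dots,q^m-1\}$. (Avoid phrasing this as ``repeatedly subtract $q^m-1$'': a single such subtraction can \emph{increase} the digit sum, e.g.\ $q^{m+1}\mapsto (q-1)q^m+1$, so the naive induction fails even though the conclusion is true.) Your extremal count $j_{\max}=(q-1)(q^{m-1}+\cdots+q^{m-u})+sq^{m-u-1}=q^m-q^{m-u-1}(q-s)=k-1$ is right, and the distance argument (MDS/Singleton from above through the inclusion, the product $\prod_{i=1}^{u}\prod_{a\in A_i}(x_i-a)\cdot\prod_{b\in B}(x_{u+1}-b)$ of weight exactly $q^{m-u-1}(q-s)$ from below) is the standard one and is correct.

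One correction of emphasis: your parenthetical ``conversely $\beta^j$ has multivariate degree $w_q(j)$'' cannot be upgraded to a proof that $\mathrm{RM}_q(r,m)$ \emph{equals} the subfield subcode, because that equality is false in general. Take $q=3$, $m=3$, $r=2$ (so $u=1$, $s=0$, $k-1=18$): the norm function $\beta\mapsto\beta^{13}=\beta^{1+3+9}$ is $\mathbb{F}_3$-valued (its nonzero values satisfy $x^2=1$) and has univariate degree $13\le 18$, so it lies in the subfield subcode of $\mathrm{RS}_{27}(27,19)$; but it has exactly one zero, while any polynomial of total degree $2<3$ has a number of zeros divisible by $3$ by Chevalley--Warning, so it is not in $\mathrm{RM}_3(2,3)$. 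What your argument proves --- and all that the lemma, read as Pellikaan and Wu intended and as the paper's Algorithm 2 requires, asserts --- is the containment of $\mathrm{RM}_q(r,m)$ in the subfield subcode together with the equality of minimum distances. With the ``conversely'' remark understood as an aside rather than a step toward equality, your proof is complete.
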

To list decode $\mathrm{RS}_{q^m}(q^m,k)$ effectively, we also need the RS list-decoder proposed in \cite{Cohn2015} which achieves quasi-linear complexity.
\begin{lemma}[\cite{Cohn2015} \cite{11197050}]
    A $q$-ary $[n,k]$ Reed-Solomon code $\mathrm{RS}(n,k)$ is $(\rho,l)$-list decodable with decoding radius $\rho$ and list size $l$ given by
    \[
    \rho=1-\sqrt{1-\delta}-\epsilon,\;l=O(1/\epsilon),
    \]
    where $\epsilon>0$ is a small real number and $\delta = (n-k+1)/n$. The decoding complexity is $O(\epsilon^{-\omega-1}\log^2 n\log\log n)$, where $\omega$ is the exponent of matrix multiplication.
\end{lemma}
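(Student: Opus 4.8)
The plan is to recover the stated parameters in two independent pieces: the list-decoding radius $\rho=1-\sqrt{1-\delta}-\epsilon$ together with the list size $l=O(1/\epsilon)$ follow from the Guruswami--Sudan interpolation method, while the complexity $O(\epsilon^{-\omega-1}\log^2 n\log\log n)$ follows from the fast structured-linear-algebra implementation of Cohn--Heninger. Throughout I identify a message with a polynomial $f$ of degree at most $k-1$ and the associated codeword with its evaluation vector $(f(x_1),\dots,f(x_n))$ at the $n$ distinct points, and I write $y=(y_1,\dots,y_n)$ for the received word.

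First I would set up the interpolation step. Fix a multiplicity parameter $s$ and a $Y$-degree bound $\ell$, and seek a nonzero bivariate $Q(X,Y)=\sum_{j\le\ell}Q_j(X)Y^j$ that vanishes at each point $(x_i,y_i)$ with multiplicity $s$ and whose $(1,k-1)$-weighted degree is at most a bound $D$. Vanishing with multiplicity $s$ imposes $n\binom{s+1}{2}$ homogeneous linear conditions, while the weighted-degree budget supplies roughly $D^2/\bigl(2(k-1)\bigr)$ free coefficients. A nonzero $Q$ therefore exists as soon as the number of admissible monomials exceeds the number of conditions, which I would arrange by taking $D\approx s\sqrt{(k-1)n}$ and $\ell\approx D/(k-1)$.

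Second I would prove correctness. If $\deg f\le k-1$ and the evaluation vector of $f$ agrees with $y$ in $t$ positions, then $g(X):=Q(X,f(X))$ has degree at most $D$ yet vanishes with multiplicity $s$ at each of the $t$ agreement points; hence $ts>D$ forces $g\equiv 0$, i.e. $(Y-f(X))\mid Q$. Substituting the parameter choices turns $ts>D$ into the agreement threshold $t>\sqrt{(k-1)n}=n\sqrt{1-\delta}$ as $s\to\infty$, where I use $1-\delta=(k-1)/n$; the additive slack $-\epsilon$ absorbs the finite-$s$ rounding. Every output codeword corresponds to a distinct factor $Y-f(X)$ of $Q$, so the list size is at most $\deg_Y Q$, and a sharper Johnson-bound argument on the number of codewords within relative radius $1-\sqrt{1-\delta}-\epsilon$ of a fixed word refines this to $l=O(1/\epsilon)$, which I would verify by taking the multiplicity $s=\Theta(1/\epsilon)$.

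Finally, for the complexity I would appeal to the fast implementation of Cohn--Heninger rather than solving the interpolation system by generic Gaussian elimination. The relevant linear system carries a Hankel/mosaic displacement structure over the underlying polynomial ring, and solving it by structured linear algebra built on fast matrix multiplication costs $O(\epsilon^{-\omega-1})$ times polylogarithmic factors, which is the source of both the exponent $\omega$ and the $\log^2 n\log\log n$ term; the root-finding step that extracts the degree-one-in-$Y$ factors is handled by fast polynomial arithmetic and contributes only lower-order cost. I expect the main obstacle to be precisely this last step: the radius and list-size tradeoff is by now routine Guruswami--Sudan bookkeeping, whereas certifying the exact bound $O(\epsilon^{-\omega-1}\log^2 n\log\log n)$ demands the delicate displacement-rank machinery and the near-linear-time factorization of Cohn--Heninger, which is where essentially all of the technical weight lies.
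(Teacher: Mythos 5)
The first thing to note is that the paper contains no proof of this lemma: it is quoted as an external result from \cite{Cohn2015} and \cite{11197050}, so there is no internal argument to compare yours against. Judged on its own terms, your sketch reconstructs the route the cited works actually take---Guruswami--Sudan interpolation with multiplicity $s=\Theta(1/\epsilon)$ to get radius $1-\sqrt{1-\delta}-\epsilon$ (via $1-\delta=(k-1)/n$) and list size $O(1/\epsilon)$, followed by a fast implementation of the interpolation step---and the parameter bookkeeping in your first two paragraphs is the standard, correct one.

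Two caveats on the complexity paragraph, which you yourself identify as carrying the technical weight. First, the attribution is off: Cohn--Heninger do not solve the interpolation system by Hankel/mosaic displacement-rank structured linear algebra (that is a different family of fast Guruswami--Sudan implementations, e.g.\ Olshevsky--Shokrollahi). Their method recasts interpolation, Coppersmith-style, as finding a short vector in a module (polynomial lattice) over $\mathbb{F}_q[z]$ of rank $m=O(1/\epsilon)$ whose entries have degree $d=O(n/\epsilon)$, and then row-reduces the basis using fast polynomial matrix arithmetic at cost $\tilde{O}(m^{\omega}d)$; this is precisely where $\epsilon^{-\omega}\cdot(n/\epsilon)=n\,\epsilon^{-\omega-1}$ and the $\log^{2}n\log\log n$ factors come from. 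Second, as literally stated the lemma's bound omits the factor of $n$: a rank-$O(1/\epsilon)$ lattice with degree-$O(n/\epsilon)$ entries cannot be reduced in time polylogarithmic in $n$, and indeed the paper's own application (Step 2 of the RM list decoder, which costs $O(\mathrm{M}_q(m)\cdot\epsilon^{-\omega-1}N\log^2 N\log\log N)$) restores that factor. Your sketch repeats the stated bound verbatim; a careful write-up should either flag this as a typo in the lemma or carry the $n$ through explicitly.
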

A slight modification to Algorithm 2 in \cite{11197050} leads to the following improved algorithm.
\begin{algorithm}[H]
\caption{RM-list-Decoder($\rho,l$)}
\begin{algorithmic}[1]
\Require $\mathbf{r}\in \mathbb{F}_q^N$
\Ensure $L=\{\mathbf{a}\in \mathbb{F}_q^N\;|\; \mathrm{wt}(\mathbf{a}-\mathbf{r})\le \rho N\}$
\State Put $L=\emptyset$
\State Call RS-list-decoder$(\rho,l)$ with input $\mathbf{r}$. Let $L'$ be its output.
\For{ $\mathbf{c}\in L'\cap \mathbb{F}_q^N$} 
    \If{DetermineCodeword($\mathbf{c}$)}
        \State $L=L\cup \{\mathbf{c}\}$
    \EndIf
\EndFor
\State \Return $L$
\end{algorithmic}
\end{algorithm}
Similarly, Step 2 will cost $O(\mathrm{M}_q(m)\cdot \frac{1}{\epsilon^{\omega+1}}N\log^2N\log\log N )$ operations, where $\mathrm{M}_q(m)$ stands for the number of operations in $\mathbb{F}_q$ to multiply two univariate polynomials over $\mathbb{F}_q$ of degree less than $m$. Step 4 requires $O(N\log N)$ operations according to Theorem \ref{mainthm} with an error probability bounded by $1/\mathrm{poly}(N)$. Therefore, the probability that the output list $L$ contains non-Reed-Muller codewords is less than $1-(1-1/\mathrm{poly(N)})^l\sim  l/\mathrm{poly}(n)$.

\subsection{Comparison with specialized algorithm for RS codes}
In this subsection we compare our algorithm with other deterministic syndrome computation algorithms. These algorithms are specifically designed for RS codes and are primarily based on FFT techniques \cite{4395269} \cite{6176255} \cite{5910103} \cite{9925209}. For RS($255,223$) and RS($1023,895$), we encode the syndrome with RS($255,32$) and RS($1023,128$), hence the designed probability is $31/255$ and $127/1023$ respectively. By taking $7$ independent rounds the error probability is less than $4\times10^{-7}$ and $4.6\times 10^{-7}$. It can be observed that our algorithm matches that of the specialized algorithms in the problem of RS codeword verification under these settings. 

\begin{table}[h]
\centering
\caption{Complexity comparison with specialized syndrome computation algorithms}
\begin{tabular}{|l|ll|ll|}
\hline
\multirow{2}{*}{} & \multicolumn{2}{l|}{RS($255,223$)} & \multicolumn{2}{l|}{RS($1023,895$)} \\ \cline{2-5} 
                  & \multicolumn{1}{l|}{Mul.}  & Add.  & \multicolumn{1}{l|}{Mul.}   & Add.  \\ \hline
Method in  \cite{4395269}       & \multicolumn{1}{l|}{3,060}  & 4,998  & \multicolumn{1}{l|}{33,620}  & 73,185 \\ \hline
Method in  \cite{6176255}      & \multicolumn{1}{l|}{252}   & 3,064  & \multicolumn{1}{l|}{2,868}   & 19,339 \\ \hline
Method in    \cite{5910103}     & \multicolumn{1}{l|}{149}   & 2,931  & \multicolumn{1}{l|}{824}    & 36,981 \\ \hline
Method in    \cite{9925209}     & \multicolumn{1}{l|}{752}   & 1,696  & \multicolumn{1}{l|}{4,160}   & 9,088  \\ \hline
Our method        & \multicolumn{1}{l|}{1,785}      &   1,778    & \multicolumn{1}{l|}{7,161}       &  7,154     \\ \hline
\end{tabular}

\end{table}

\bibliographystyle{alpha}
\bibliography{sample}

\end{document}